\documentclass[11pt]{article}

\usepackage[margin=1.1in]{geometry}
\usepackage{amsmath,amsthm}

\usepackage{xcolor}
\usepackage{graphicx}
\usepackage{lineno}
\usepackage{alltt}
\usepackage{mathtools,amssymb}
\usepackage{statmath}
\usepackage{natbib}
\usepackage{booktabs}
 \usepackage{setspace}
\usepackage{epsf}
\usepackage{graphicx}
\graphicspath{{./graphics/}{.}}
\usepackage{anyfontsize} % avoid font shape warnings
\usepackage[nolists]{endfloat}
\newtheorem{thm}{Theorem}[section]

\newcommand{\cip}{\mbox{$\perp\!\!\!\perp$}}

\newcommand{\nothere}[1]{}

\title{A Joint QoL–Survival Framework with Debiased Estimation under Truncation by Death}

\author{Torben Martinussen$^{1,*}$, 
Klaus K\"ahler Holst$^{2}$,
Christian Bressen Pipper$^{2,3}$, 
Per Kragh Andersen$^{1}$ \\
$^{1}$Section of Biostatistics, University of Copenhagen, Copenhagen, Denmark\\
$^{2}$Novo Nordisk, Søborg, Denmark \\
$^{3}$ Section of Epidemiology, Biostatistics and Biodemography, University of Southern Denmark, Odense, Denmark
}

\begin{document}

\maketitle

%\doublespacing 

\vspace{1cm}

\centerline{\sc Summary}
Evaluating quality-of-life (QoL) outcomes in populations with high mortality risk is complicated by truncation by death, since QoL is undefined for individuals who do not survive to the planned measurement time. We propose a framework that jointly models the distribution of QoL and survival without extrapolating QoL beyond death. Inspired by multistate formulations, we extend the joint characterization of binary health states and mortality to continuous QoL outcomes. Because treatment effects cannot be meaningfully summarized in a single one-dimensional estimand without strong assumptions, our approach simultaneously considers both survival and the joint distribution of  QoL and survival with the latter conveniently displayed in a simplex. We develop assumption-lean, semiparametric estimators based on efficient influence functions, yielding flexible, root-n consistent estimators that accommodate machine-learning methods while making transparent  the conditions these must satisfy. The proposed method is illustrated through simulation studies and two real-data applications.
\noindent

\vspace{-2mm}

\noindent
{\it Keywords}:   Debiased learning; Quality of life; Simplex;  Truncation by death.

\bigskip

\section{Introduction}

The evaluation of functional outcomes, such as quality of life (QoL), in studies involving patients with high severity of illness or risk of mortality presents significant challenges, as the outcome is only well-defined for subjects who survive until the planned landmark time for measurement. This phenomenon, known as \emph{truncation by death}, will drive the clinical questions of interest and is therefore a crucial component of both study design and subsequent data analysis.
Addressing this complex issue requires careful consideration, as different statistical approaches support  fundamentally different clinical research questions and therefore complement each other rather than necessarily agree in terms of conclusions. This premise is especially relevant  when treatment affects the terminal event process. When the purpose is to evaluate a treatment effect on an endpoint such as QoL, we argue, similarly to \citep{sjolander2011_principalstrat}, that the choice of method should be guided by whether actionable and clinical relevant decisions can be made based on the estimand.

Existing methodologies broadly fall into categories that either treat measurements truncated by terminal events as ordinary missing data that can be handled using specific missing at random assumptions, causal methods that aim at estimating the treatment effect in specific subpopulations such as always survivors, and finally methods that model the QoL outcome conditional on not having reached the terminal event.
Examples of the first approach include direct modeling of the joint trajectory of both the functional outcome and survival \citep{diggle2002, wulfsohn1997, rizopoulos2012}. Such approaches, however, effectively extrapolate the QoL endpoint beyond death, which means the interpretation relies heavily on untestable model assumptions which are not desirable in many situations. Moreover, they address the effect of treatment in a hypothetical scenario where it is assumed that death can be prevented. Such a scenario may not be clinically relevant \citep{Kahan2020}.
Principal stratification methods and in particular Survivor Average Causal Effects (SACE) \citep{robins1986:sace, ding2011identifiability, wang2017identification, YangDing2018Biometrics, LuoLiHe2023Biometrics, stensrud2022:ice_estimands} are examples of the second approach. While the principal stratification provides a sound mathematical framework, it has been pointed out that the resulting treatment effect estimates can be difficult to base clinical decisions on, as the patient subgroups might not be identifiable at the time of intervention.
More recently, separable effects have been proposed as an alternative \citep{stensrud2023conditional, stensrud2022:ice_estimands}, aiming to quantify causal effects of modified versions of the study treatment within an observable subset of the population. These effects, including separable direct and indirect effects, attempt to disentangle causal mechanisms by considering hypothetically modified treatments that operate through distinct causal pathways. While addressing some limitations of SACE, this approach still depends on the conceptualization of imagined interventions (i.e., modified treatments and their isolation conditions) which may be hard to justify in practice. From a clinical transparency perspective, relying on such hypothetical interventions to define a treatment effect can therefore be challenging for practical decision-making, especially in the primary evaluation of treatment efficacy.
The last approach, i.e. modeling
the QoL outcome conditional on being alive \citep{kurland2005directly, kurland2009longitudinal}, has been criticized for lack of causal interpretation \citep{rubin2006:principalstrata} as it compares two different populations, survivors in the treated and non-treated group respectively. This means that any treatment effect deduced by contrasting these two populations cannot be viewed in isolation as it may in part be driven by the effect of treatment on survival. 

In contrast, the approach proposed here provides a framework for jointly describing the distribution of QoL and survival alongside the marginal distribution of survival.
This is inspired by existing multistate models that effectively characterize the joint evolution of binary QoL indicators and death. We extend this conceptualization to continuous QoL measurements, recognizing that a disease score is not clinically meaningful beyond the terminal event.
The inherent complexity of truncation by death makes it challenging — if not impossible — to summarize treatment effects in a single one-dimensional estimand without unverifiable assumptions, such as those invoked when targeting separable effects. Consequently, we emphasize the importance of jointly considering both the probability of avoiding the terminal event and the joint probability distribution of QoL and being alive.
The proposed methodology is assumption-lean and does not require explicit assumptions about the trajectory of QoL or the timing of the terminal event.
We further develop debiased estimators based on
corresponding efficient influence functions (EIF) \citep{bickel1993efficient,kennedy2022semiparametric}. This
give rise to efficient and flexible root-n consistent estimators that are asymptotically model-free
by enabling data-adaptive methods (e.g. machine learning). We also derive the corresponding remainder term, which clarifies the requirements placed on any machine-learning methods used in practice. The method is investigated in a simulation study using realistic sample sizes and these confirm our theoretical results. We apply the proposed method
estimators first to some data concerning quality of life for prostate cancer patients and then also to data from the FLOW clinical kidney outcome trial \citep{Flow2024},
where focus is on the  estimated glomerular filtration rate (eGFR).

\nothere{
\begin{itemize}
\item Older paper, partly conditional model $E(Y|A=a,T>t)$, not sure: \cite{dufouil2004analysis}
\item Partly conditional inference: \cite{wen2018methods}
\item Overview:
\begin{itemize}
\item\cite{colantuoni2018statistical}: 1. Survivors only 2. SACE 3. Composite endpoint
\item \cite{sakamaki2022statistical} Reiterates some of the methods mentioned in \cite{colantuoni2018statistical}, but also "Semi-competing risk analysis", where time to deterioration (or worsening) of QoL is of interest and time to death is a competing event (so "illness-death" setting)
\end{itemize}
\item Inverse probability weighting: \cite{weuve2012accounting} \cite{chaix2012commentary} \cite{shardell2015doubly}
\item Composite endpoint:\cite{wang2017inference}, \cite{xiang2023survival} \medskip
\end{itemize}
}

\section{Quantifying treatment effects on survival and QoL jointly}

\subsection{Preliminaries}
We use $T$ to denote the time until death and let $Y$ denote the marker, such as quality of life (QoL).
The marker  is planned to be measured at landmark time point $t$,
 but it is unmeasured if $T < t$ (truncation by death). The treatment indicator is denoted by $A$ and baseline covariates by $L$.
 We %let $X=(A,L^T)^T$ and 
use 
 $(T^a,Y^a)$ to denote potential outcomes if treatment was set to $a$.
 Estimands will be formulated for the "full data" situation (no censoring), but the proposed methods allow for the common observed data situation of right-censoring. 
Let $C$ denote the right-censoring time and set $T^*=T\wedge C$.
 Since we are interested in the marker at the landmark point in time $t$, censoring only affects the needed information if it happens before time $t$.  
 %and emphasize that 
 %because of the land
 %and denote 
 Thus, we denote  an observed 
 %the observed data by $\{T^*=T\wedge C,\Delta=I(T\leq C),Y I(t\leq T^*),A,L\}$, OR RATHER 
 data point by $O=\{T_t^*=t\wedge T^*,\Delta_t=I(T\wedge t\leq C),I(t= T_t^*),Y I(t= T_t^*),A,L\}$ and
  assume that $\{T,Y I(t< T)\}$ and $C$ are conditionally independent, given $A$ and $L$.
 Although our main results apply to a general real-valued marker 
$Y$, we begin with the situation where $Y$ is binary  to fix ideas. The binary case also yields a convenient summary, which is described in the next subsection.

\subsection{Binary marker}

If the QoL marker is binary, for example, $Y=1$ for `good QoL' and $Y=0$ for 'poor QoL’, then its development, jointly with death, can be described using a multi-state model, say $V(t)$. The relevant model is the %{\em 
illness-death model
%}, 
%see Figure \ref{fig:illnessdeath}, 
with states `0: alive $Y=0$', `1: alive $Y=1$, and `$D$ dead', that is, $V(t)=D$ if and only if  $T\leq t$. 
%with $T$ denoting the survival time, i.e., $V(t)=D$ iff $T\leq t$.
%\medskip
%\centerline{Figure \ref{fig:illnessdeath} about here}

If treatment was set to $a$, the  corresponding 
state occupation probabilities
 ${\cal Q}_j^{a}=P\{V^{a}(t)=j\}, j=0,1,D$ (suppressing their dependencies on $t$),  for which 
${\cal Q}_0^{a}+{\cal Q}_1^{a}+{\cal Q}_D^{a}=1$,
can be depicted in a {\em simplex}, see Figure \ref{fig:simplex}. 
The `good' state to occupy is, obviously, state 1.
Here, a beneficial treatment effect, as exemplified in the figure, is indicated by $A=1$ being closer to $({\cal Q}_0,{\cal Q}_1,{\cal Q}_D)=(0,1,0)$ (lower right corner) and further away from $({\cal Q}_0,{\cal Q}_1,{\cal Q}_D)=(0,0,1)$ (top corner) than $A=0$. 
Key quantities are thus
$P(Y^{a}=1,T^{a}>t)$ and $P(T^{a}\leq t)$ corresponding to ${\cal Q}_1^{a}$ and ${\cal Q}_D^{a}$.
\medskip

\begin{figure}
\centering
\caption{Simplex summarizing $\{Q_0(t), Q_1(t), Q_D(t)\}$ for a fixed $t$.
\label{fig:simplex}}
\setlength{\unitlength}{5mm}
\makebox[\linewidth]{%
\begin{picture}(22,22)(0,0) % width and height now aligned to the drawing
  \put(0,4){\line(1,0){20}}
  \put(-3,4){\large $Q_0=1$}
  \put(21,4){\large $Q_1=1$}
  \put(8.5,20){\large $Q_D=1$}
  \put(15,12){\large $Q_0=0$}
  \put(2,12){\large $Q_1=0$}
  \put(8.5,2){\large $Q_D=0$}
  \put(12,8){\circle*{0.3}}\put(12.5,8){\large $A=1$}
  \put(8,12){\circle*{0.3}}\put(8.5,12){\large $A=0$}
  \put(0,4){\line(2,3){10}}
  \put(20,4){\line(-2,3){10}}
  \put(20,4){\circle*{0.5}}
  \put(10,19){\circle*{0.5}}
\end{picture}%
}
\setlength{\unitlength}{1pt}
\end{figure}
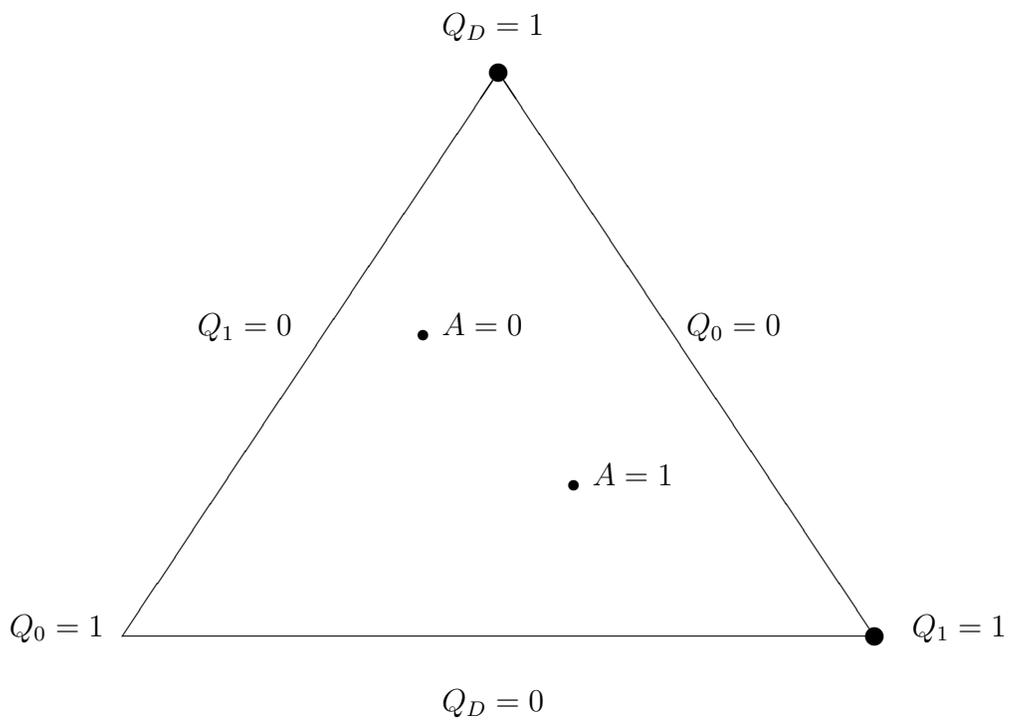%

% \centerline{Figure \ref{fig:simplex} about here}

Note that situations may arise where  treatment may be beneficial in terms of survival while  the opposite can be the case for the  probability $P(Y=1,T>t)$. An example is provided in the Supplementary Materials. This  stresses that one should report both $P(Y^{a}=1,T^{a}>t)$ and $P(T^{a}>t)$ for $a=0,1$. One way to do this is to use the simplex idea as described above. 

\subsection{Real-valued marker}
We now allow $Y$ to be real-valued. Inspired by the previous subsection, and for a fixed time \(t\), define
 $\eta_a(y)=P(T^a>t,Y^a>y)$, $a=0,1$ (suppressing the dependency of $t$ in the notation). We further assume that $L$ is a sufficient set of confounders (no unmeasured confounders) so that the   
 estimands $\{\eta_a(y),P(T^{a}\leq t)\},$ $a=0,1,$ are 
identifiable.
For example,
$$
\eta_a(y)=E\{P(T>t,Y>y|A=a,L)\}.
$$
We use 
$
\eta(y)=\eta_1(y)-\eta_0(y)
$
for the contrast of probabilities.
%Further, let 
%let 
%$\eta_1(y;L)=P(T>t,Y>y|A=1,L)$,
%and define $\eta_0(y;L)$ similarly. 
Note further that $\eta_a(0)={\cal Q}_1^{a}$ if $Y$ is binary. 
 
\section{Debiased learning}\label{sec:eif}
Suppose we have $n$ i.i.d. replicates of $O=\{T_t^*=t\wedge T^*,\Delta_t=I(T\wedge t\leq C),I(t= T_t^*),Y I(t= T_t^*),A,L\}$,
%, where $T$ denotes the survival time, $Y$ is the outcome of interest (QoL) that is measured at landmark time point   $t$ only if the subject has not yet died, $A$  is the treatment indicator (coded 1 for treatment and 0 for control), and $L$ is a vector of baseline covariates. 
assume  also that $0<\pi(L)<1$ a.s., where $\pi(L)=P(A=1|L)$ is the propensity score.
%We use $C$ to denote the censoring variable.
Define $Q_y(A,L)=P(T>t,Y>y|A,L)$, $S(t|A,L)=P(T>t|A,L)$ and $G(y|A,L)=P(Y>y|T>t,A,L)$. %where we have suppressed the dependency on $t$ in the notation. 
We can re-express
$$Q_y(A,L)=G(y|A,L)S(t|A,L).$$

The efficient influence function of $\eta_1(y)$, which forms the basis for the proposed estimation procedure for this quantity, is given in the following theorem.

\begin{thm}
\label{thm:eif}
The efficient influence function of $\eta_1(y)$  w.r.t. the observed data law $P$ of $O=\{T_t^*=t\wedge T^*,\Delta_t=I(T\wedge t\leq C),I(t= T_t^*),Y I(t= T_t^*),A,L\}$, assuming $C\cip \{T,YI(t<T)\}|A,L$, is 
\begin{equation}
    \label{eq:eif}
D^*_{\eta_1}(O,P)  =Q_y(1,L)-\eta_1(y)+\tilde D_{\eta_1}(O,P),
\end{equation}
where $\tilde D_{\eta_1}(O,P)$ is the debiasing term: 
$$
    \tilde D_{\eta_1}(O,P)=\frac{A}{\pi(L)}\left\{\frac{I(t<T^*)I(Y>y)}{K(t|A,L)}+G(y|A,L)S(t|A,L)\int_0^{ t}\frac{dM_C(r|A,L)}{H(r|A,L)}-Q_y(A,L)\right\}
$$
with  $S(r|A,L)=P(T>r|A,L)$, $K(r|A,L)=P(C>r|A,L)$, $H=SK$ 
and where $M_C(r|A,L)$ is the censoring martingale at time $r$ (i.e., the counting process martingale corresponding to the counting process $N_C(r)=I(T^*\leq r,\Delta=0)$ and the filtration generated by this counting process and $A,L$).
\end{thm}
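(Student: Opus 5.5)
The plan is to split $\eta_1(y)=E\{Q_y(1,L)\}$ into a full-data influence function and a censoring (IPCW) correction, and then glue the two together. This uses the standard coarsening-at-random machinery of Tsiatis for monotone coarsening by right censoring.

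\emph{Full-data step.} Suppose first there is no censoring, so the data are $Z=\{T\wedge t, I(T>t), YI(T>t), A, L\}$. The functional is the usual treatment-specific mean of the binary outcome $B=I(T>t)I(Y>y)$. Its nonparametric efficient influence function is the AIPW form
$$D^F(Z)=Q_y(1,L)-\eta_1(y)+\frac{A}{\pi(L)}\{B-Q_y(1,L)\}.$$
I will verify this by the standard pathwise-derivative argument. Take a regular submodel $P_\epsilon$ with score $s=s_L+s_{A|L}+s_{Z|A,L}$ and differentiate $\int Q_{y,\epsilon}(1,l)\,dP_\epsilon(l)$. This gives $E[\{Q_y(1,L)-\eta_1\}s_L]+E[A\pi(L)^{-1}\{B-Q_y(1,L)\}s_{Z|A,L}]$. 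Since the full-data model is nonparametric, the tangent space is saturated and $D^F$ is the unique gradient.

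\emph{Observed-data step.} The full data $Z$ are observed whenever $\Delta_t=1$, i.e. when $C\ge T\wedge t$. Censoring is coarsening at random by the assumption $C\cip\{T,YI(t<T)\}\mid A,L$, and the censoring law is left unrestricted. By Tsiatis (Thm. 10.1 and 10.4), the observed-data EIF is therefore obtained in two moves. First, take the IPCW version $\Delta_t D^F/K(T\wedge t-|A,L)$. Second, subtract its projection onto the censoring tangent space $\{\int h(r,A,L)\,dM_C(r)\}$. Because only $B$ in $D^F$ involves unobserved quantities, the weighted term becomes
$$\frac{A}{\pi(L)}\frac{I(t<T^*)I(Y>y)}{K(t|A,L)}.$$
This uses $B\Delta_t=I(t<T^*)B$, with the usual convention at the landmark. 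The projection is then computed with the familiar formula
$$\int_0^t E\{B\mid T>r,A,L\}\frac{dM_C(r|A,L)}{K(r|A,L)}.$$
Here $E(B\mid T>r,A,L)=Q_y(A,L)/S(r|A,L)$, since $B=0$ whenever $T\le t$. This yields exactly the term $G(y|A,L)S(t|A,L)\int_0^t dM_C(r)/H(r|A,L)$ in $\tilde D_{\eta_1}$, with the correct sign coming from the identity $\Delta/K(T^*)=1-\int dM_C/K$.

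\emph{Check and obstacle.} As a sanity check, I will verify directly that $D^*_{\eta_1}$ is a gradient. Concretely, for a parametric submodel with score $s(O)$, I will confirm that $\partial_\epsilon\eta_1(P_\epsilon)=E\{D^*_{\eta_1}s\}$. The mean-zero property follows from the martingale term having mean zero given $(A,L)$ and from $E\{I(t<T^*)I(Y>y)\mid A,L\}=Q_yK(t)$. Efficiency then follows because, with $\pi$ and $K$ unrestricted, the observed-data tangent space is the whole nonparametric $L_2^0$ (CAR with saturated full-data model), so the gradient is unique.

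The main obstacle I expect is the projection computation onto the censoring martingale space. This requires the conditional expectation of $B$ given survival past $r$, which is the step producing the $1/H=1/(SK)$ factor. It also requires care with left limits and ties at $t$ (the $K(t)$ versus $K(t-)$ convention). I would handle it by writing $E\{B\,dM_C(r)\}$ through the at-risk indicator $I(T^*\ge r)$, then using the independence of $C$ and $(T,YI(T>t))$ given $(A,L)$ to factor $E\{B\,I(T\ge r)\mid A,L\}=Q_y(A,L)$.
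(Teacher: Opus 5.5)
Your proposal is correct and follows essentially the same route as the paper: derive the full-data AIPW influence function $Q_y(1,L)-\eta_1(y)+A\pi(L)^{-1}\{I(Y>y,T>t)-Q_y(A,L)\}$, then map it to the observed data with Tsiatis' augmented IPCW construction for right-censoring (Ch.~10, Sect.~10.4). Where the paper only says ``some algebra,'' you spell out the projection onto the censoring martingale space: you use $E\{I(Y>y,T>t)\mid T\ge r,A,L\}=Q_y(A,L)/S(r-|A,L)$ to produce the $1/H$ factor, and you note that terms depending only on $(A,L)$ pass through the mapping unchanged.
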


{\bf Remarks}
\vspace{-0.5cm}
\begin{itemize}
\item[(i)] 
If the treatment is randomized with a known randomization probability $P(A=1|L)=\pi$ then, as shown in the Supplementary Material, the efficient influence curve is given by 
\begin{align}
\label{eif:rand}
D^*_{\eta_1}(O,P)-\frac{(A-\pi)}{\pi(1-\pi)}E\left\{D^*_{\eta_1}(O,P)(A-\pi)|L\right\}=D^*_{\eta_1}(O,P),
\end{align}
i.e., it is unchanged. This is because the efficient influence curve $D^*_{\eta_1}(O,P)$ is already orthogonal to functions of the type $(A-\delta)h(L)$ when $A$ and $L$ are independent. The fact that the efficient influence function is unchanged does not mean that the covariates $L$ can be dropped, which is clear from  \eqref{eq:eif}. In other words, had we calculated the efficient influence curve ignoring $L$ (and assuming a correspondingly simpler independent censoring mechanism), then doing the projection corresponding to the left-hand side of \eqref{eif:rand} would give us the already obtained EIF.
\item[(ii)] 
An alternative estimand of potential  interest is $\psi=\psi_1-\psi_0$ with 
$$
\psi_a=E\{Y^{a}I(T^{a}>t\}=E[E\{YI(T>t)|A=a,L\}].
$$
%Since
%\begin{align*}
%    E\{YI(T>t)|A=a,l\}&=\int y P(Y=y|T>t,A=a,l)dyP(T>t|A=a,l)\\
%    &=\int P(Y>y|T>t,A=a,l)dyP(T>t|A=a,l)\\
%    &=\int P(Y>y,T>t|A=a,l)dy
%\end{align*}
%it follows that 
It may be seen that
\begin{equation}
\label{mean}
\psi=\int \{\eta_1(y)-\eta_0(y)\}dy
\end{equation}
and thus also 
$$
D^*_{\psi}(O,P)=\int \{D^*_{\eta_1(y)}(O,P)-D^*_{\eta_0(y)}(O,P)\}dy,
$$
 where we have emphasized  that $D^*_{\eta_a(y)}(O,P)$ depends on $y$, $a=0,1$.

%INTERNAL NOTE: One could also work directly with $E[E\{YI(T>t)|A=a,L\}]$, which leads to the EIF (when no censoring):
%$$
%E\{YI(T>t)|A=a,L\}-\psi_a+\frac{I(A=a)}{P(A=a|L)}\bigl [YI(T>t)-E\{YI(T>t)|A=a,L\}\bigr ]
%$$
%that can be generalized to the censored data case (Tsiatis Ch. 10, as done in the Appendix for the net benefit estimand). However, there should be no need for this, because of \eqref{mean}.
\item[(iii)] 
%In our second data application concerning data from the FLOW clinical outcome trial, see Section \ref{sec:flow}, 
In some applications,
the marker $Y$ may missing for some individuals still alive at the landmark time point. If we let $R$ be the indicator of $Y$ not being missing and assume that $P(R=1|A,L,T^*>t,Y)=P(R=1|A,L,T^*>t)$ (MAR) then the efficient influence function is as in \eqref{eq:eif} except that the debiasing term 
$ \tilde D_{\eta_1}(O,P)$ is changed to 
$$
   \frac{RA}{p(A,L)\pi(L)}\left\{\frac{I(t<T^*)I(Y>y)}{K(t|A,L)}+G(y|A,L)S(t|A,L)\int_0^{ t}\frac{dM_C(r|A,L)}{H(r|A,L)}-Q_y(A,L)\right\},
$$
where $p(A,L)=P(R=1|A,L,T^*>t)$. A proof of this extended result is given in the %Appendix. 
Supplementary Material.
%NOTE: It may be preferable to leave this out altogether; otherwise, we may need to provide remainder terms for this extended situation, which risks sidetracking the main story.
\end{itemize}

\section{Robust estimation and large sample results}\label{sec:estimation}

Since we have a model-free expression of $\eta_1(y)$, it is tempting to substitute 
$S(t|A=1,L)$ and $G(y|A=1,L)$ in $Q_y(1,L)$ by
data-adaptive estimators (e.g., machine learning based estimators)
to obtain an estimator 
$$Q^n_y(1,L)=G_n(y|T>t,A,L)S_n(t|A,L)\}$$ 
of $Q_y(1,L)$, based on which a plug-in estimator
$$
\hat\eta_1^s(y)=\mathbb{P}_n\{Q^n_y(1,L)\},
$$
of $\eta_1(y)$ is readily obtained; here, $\mathbb{P}_n\{v(Z)\}=n^{-1}\sum_i v(Z_i)$ is the empirical measure.
%, and similarly with other unknown quantities. 
Unfortunately, however, the resulting estimator generally has poor performance because the bias-variance trade-offs made in the estimation of $S(t|A=1,L)$ and $G(y|A=1,L)$ are not optimized towards the estimation of $\eta_1(y)$. To first order, the bias of this plug-in estimator is given by minus the sample average of the (estimated) efficient influence curve. 
A one-step (debiased) estimator  \citep{kennedy2022semiparametric} 
 is  therefore  by 
\begin{align*}
  \hat\eta_1^{os}(y)=\hat\eta_1^s(y)
  +\mathbb{P}_n\biggl [&\frac{A}{\pi_n(L)}\biggl \{\frac{I(t<T^*)I(Y>y)}{K_n(t|A,L)}\\
  &-G_n(y|T>t,A,L)S_n(t|A,L)\biggl (1-\int_0^{ t}\frac{dM^n_C(r|A,L)}{H_n(r|A,L)}\biggr )
  \biggr \}\biggr ]. 
\end{align*}
For this, we assume that all the nuisance parameters have been substituted with consistent estimators.
%obtained on a separate part of the data (that excluded observation $i$). 
We therefore make use of data-adaptive estimators $\pi_n(L)$, $G_n(y|A,L)$, $S_n(r|A,L)$ and $K_n(r|A,L)$ of $\pi(L)$, $S(t|A,L)$, $G(y|A,L)$ and $K(t|A,L)$, respectively. 
Specifically, we use super-learners %\citep{laan07:super_learn},
\citep[Chapter3]{targetedlearning_2011}, \citep{westling2024inference} combining parametric and machine learning models.
% R-packages {\tt SuperLearner} and  {\tt survSuperLearner}.
We also make use of 5-fold cross-fitting \citep{chernozhukov_double/debiased_2018}, conducted as explained in \cite{vansteelandt2022assumption}.
Theorem \ref{thm:remainder}, whose proof is provided in the Supplementary Material, states the limit distribution of the resulting estimator.

\begin{thm}\label{thm:remainder}
Consider $n$ i.i.d. replicates of $O=\{T_t^*=t\wedge T^*,\Delta_t=I(T\wedge t\leq C),I(t= T_t^*),Y I(t= T_t^*),A,L\}$
%data $O_i=\{T_{ti}^*=T_i\wedge t\wedge  C_i,\Delta_{ti}=I(T_i\wedge t\leq C_i),I(t= T_{ti}^*),Y_i I(t= T_{ti}^*),A_i,L_i\}$, $i=1,...,n$ 
%$O_i\equiv   (T_i^*,\Delta_i,Y_i I(t\leq T_i^*),A_i,L_i), i=1,...,n$ 
and suppose that $C\cip \{T,YI(t<T)\}|A,L$. Then $\hat\eta_1^{os}(y)$ is asymptotically normally distributed with mean $\eta_1(y)$
and a variance that can be consistently estimated as the sample variance of $D^*_{\eta_1}(O,P_n)$,
provided that the nuisance parameter estimators 
$\pi_n(L)$, $G_n(y|A,L)$, $S_n(r|A,L)$ and $K_n(r|A,L)$ are trained on a separate part of the data, and that   all of the following terms are $o_p(n^{-1/2})$:
\begin{align}
    &E\left [ \left (\frac{\pi_n-\pi}{\pi_n}\right )(L)S_n(t|1,L)(G_n-G)(y|1,L)\right ],
    E\left [ \left (\frac{\pi_n-\pi}{\pi_n}\right )(L)G(y|1,L)(S_n-S)(t|1,L)\right ],\notag\\
    &E\left [ S_n(t|1,L)(G_n-G)(y|1,L)\left (\frac{K_n-K}{K_n}\right )(t|1,L)\right ],\notag\\
    &E\left [ G(y|1,L)(S_n-S)(t|1,L)\left (\frac{K_n-K}{K_n}\right )(t|1,L)\right ], \label{robprop}\\
    &E\left [\pi(L)Q_n(1,L)\int_0^t
    \left (\frac{K}{K_n}\right )(r|1,L)\left(\frac{S_n-S}{S_n}\right )(r|1,L)d(\Lambda_C^n-\Lambda_C)(r|1,L)
    \right ].
    \notag 
\end{align}
\end{thm}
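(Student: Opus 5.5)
We follow the standard one-step estimator analysis built on von Mises expansions (cf. \citealp{kennedy2022semiparametric}). Write $\hat P$ for the nuisance estimates $(\pi_n,G_n,S_n,K_n)$ together with the empirical distribution of $L$, so that $\hat\eta_1^{os}(y)=\eta_1(\hat P)+\mathbb{P}_n D^*_{\eta_1}(O,\hat P)$. Using the identity $\eta_1(\hat P)=-P D^*_{\eta_1}(O,\hat P)+\eta_1(\hat P)+P D^*_{\eta_1}(O,\hat P)$, we obtain the decomposition
\begin{align*}
\hat\eta_1^{os}(y)-\eta_1(y)=(\mathbb{P}_n-P)D^*_{\eta_1}(O,P)+(\mathbb{P}_n-P)\{D^*_{\eta_1}(O,\hat P)-D^*_{\eta_1}(O,P)\}+R(\hat P,P),
\end{align*}
with $R(\hat P,P)=\eta_1(\hat P)-\eta_1(P)+P D^*_{\eta_1}(O,\hat P)$.

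The first term is a centred i.i.d. average with variance $\mathrm{var}\,D^*_{\eta_1}(O,P)$. By Theorem \ref{thm:eif} it is asymptotically normal by the CLT. The second, empirical-process term is $o_p(n^{-1/2})$ by cross-fitting: conditional on the training fold, it has mean zero and conditional variance bounded by $\|D^*(\hat P)-D^*(P)\|^2_{L_2(P)}/n$. This norm tends to zero by consistency of the nuisance estimators, provided $\pi_n$ and $K_n$ are bounded away from zero, and Chebyshev then gives the bound \citep{chernozhukov_double/debiased_2018}. The same $L_2$ consistency shows that the sample variance of $D^*_{\eta_1}(O,P_n)$ is consistent.

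The main work is to show that $R(\hat P,P)$ equals a sum of the five product terms listed in the statement. We would compute $PD^*_{\eta_1}(O,\hat P)$ by first conditioning on $(A,L)$, which turns $A/\pi_n$ into $\pi/\pi_n$. The conditional expectation of the IPCW term is
\[
E\{I(t<T^*)I(Y>y)|A=1,L\}/K_n(t|1,L)=Q_y(1,L)\,K(t|1,L)/K_n(t|1,L),
\]
where the factorization uses $C\cip\{T,YI(t<T)\}|A,L$. For the augmentation term, we note that $dM^n_C=dN_C-I(T^*\ge r)d\Lambda^n_C$ and take expectations under the true law. This gives
\[
E\Bigl\{\int_0^t dM^n_C/H_n\,\Big|\,1,L\Bigr\}=\int_0^t (SK)(r)\,d(\Lambda_C-\Lambda_C^n)(r)/(S_nK_n)(r),
\]
using $E\,dN_C=SK_- d\Lambda_C$ and ignoring ties. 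Applying the Duhamel-type identity $K/K_n(t)-1=-\int_0^t (K_-/K_n)\,d(\Lambda_C-\Lambda^n_C)$, the IPCW and augmentation pieces combine. Write $S/S_n=1+(S-S_n)/S_n$; the leading "1" part cancels the IPCW bias, leaving the last (integral) term of \eqref{robprop}. The remaining $Q_y(1,L)(K-K_n)/K_n$ piece, minus its $Q^n$ counterpart, splits into the third and fourth terms through
\[
Q-Q_n=S_n(G-G_n)+G(S-S_n).
\]
Finally, the outer factor $\pi/\pi_n=1-(\pi_n-\pi)/\pi_n$, applied to $Q_y(1,L)-Q^n_y(1,L)$ and combined with $\eta_1(\hat P)-\eta_1(P)=\mathbb{P}_nQ^n-EQ$, gives the first two terms after the same split of $Q-Q_n$. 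The $(\mathbb{P}_n-P)Q^n$ piece is absorbed into the empirical-process term.

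The obstacle we anticipate is the bookkeeping in the censoring-martingale calculation. The Duhamel identity must be applied so that exactly the product structure $(S_n-S)/S_n\times d(\Lambda^n_C-\Lambda_C)$ remains, with the weights $\pi Q_n K/K_n$ appearing as stated. We would first verify the computation in the case $K_n=K$, where the remainder should reduce to the first two terms, and then add the censoring error. Under the assumed rates every term in $R$ is $o_p(n^{-1/2})$, and Slutsky's theorem completes the proof.
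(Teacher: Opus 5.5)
Your proposal follows the same route as the paper's proof. There, too, the remainder $\psi(P_n)-\psi(P)+E\{D^*(\psi,P_n)\}$ is computed by conditioning on $(A,L)$, and the augmentation term becomes $-E[(\pi/\pi_n)(L)Q_n(1,L)\int_0^t (H/H_n)(r|1,L)\,d(\Lambda_C^n-\Lambda_C)(r|1,L)]$. The product terms then come from the identities $H/H_n=K/K_n-(K/K_n)(S_n-S)/S_n$, $\int_0^t (K/K_n)\,d(\Lambda_C^n-\Lambda_C)=(K/K_n)(t)-1$ and $G_nS_n-GS=(G_n-G)S_n+(S_n-S)G$. Your explicit treatment of the CLT term and of the cross-fitted empirical-process term is something the paper leaves implicit.

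One step of your bookkeeping is off. Conditioning on $(A,L)$ puts the factor $\pi/\pi_n$ in front of the \emph{whole} conditional mean of the debiasing term, which works out to
\[
\frac{\pi}{\pi_n}(L)\left\{(Q-Q_n)(1,L)\left(\frac{K}{K_n}\right)(t|1,L)+Q_n(1,L)\int_0^t\left(\frac{K}{K_n}\right)\left(\frac{S_n-S}{S_n}\right)(r|1,L)\,d(\Lambda_C^n-\Lambda_C)(r|1,L)\right\}.
\]
You apply $\pi/\pi_n=1-(\pi_n-\pi)/\pi_n$ only to the $Q-Q_n$ piece. Doing the same for the $K$-cross piece yields the third and fourth listed terms plus an additional triple product,
\[
E\left[\left(\frac{\pi_n-\pi}{\pi_n}\right)(L)(Q_n-Q)(1,L)\left(\frac{K_n-K}{K_n}\right)(t|1,L)\right],
\]
whose sign is immaterial. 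The integral term also comes out with weight $(\pi/\pi_n)(L)Q_n(1,L)$ rather than $\pi(L)Q_n(1,L)$. The paper's own proof displays this sixth term, even though it is absent from the list in the theorem. Your planned check with $K_n=K$ would not detect it, since it vanishes there.

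So $R(\hat P,P)$ is not exactly the sum of the five displayed terms. You need either to add the triple product to the hypotheses or to bound it separately. For example, under positivity it is at most $\sup_l|\{(K_n-K)/K_n\}(t|1,l)|\,\|(\pi_n-\pi)/\pi_n\|_{L_2(P)}\|Q_n-Q\|_{L_2(P)}$. This is $o_p(n^{-1/2})$ once $K_n$ is uniformly consistent and $\|\pi_n-\pi\|\,\|Q_n-Q\|=O_p(n^{-1/2})$. It is not implied by the signed-expectation conditions as stated.
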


{\bf Remarks}
\vspace{-0.5cm}
\begin{itemize}
\item[(i)] The robustness properties of the estimator  $\hat\eta_1^{os}(y)$ are seen from display \eqref{robprop}
because of the product structure of each of the five terms in \eqref{robprop}. 
They suggest forms of rate double robustness between $\pi_n(L)$ and $G_n(y|1,L)$, between $\pi_n(L)$ and $S_n(t|1,L)$, between $G_n(y|1,L)$ and $K_n(t|1,L)$,
and between $K_n(r|1,L)$ and $S_n(r|1,L)$ $(r\leq t$), meaning, for example, that the rate conditions of Theorem \ref{thm:remainder} can still be achieved when $K_n(t|A,L)$ is slowly converging, provided that this is compensated by fast convergence of $G_n(y|A,L)$.
\item [(ii)] Clearly, a similar result holds for $\hat\eta_0^{os}(y)$  replacing "1" with  "0"  in display \eqref{robprop}. 
 Thus, under these conditions (for both $a=0,1$), also $\hat\eta^{os}(y)=\hat\eta_1^{os}(y)-\hat\eta_0^{os}(y)$ is asymptotically normal distributed with mean $\eta(y)$
and variance that can be consistently estimated as the sample variance of $D^*_{\eta_1}(O,P_n)-D^*_{\eta_0}(O,P_n)$,
\item[(iii)] The efficient influence curve  of $S_a(u)\equiv P(T^a>u)=E\{S(u|a,L)\}$, with $S(u|a,L)=P(T>u|A=a,L)$ $(u\leq t)$, is given by 
$$
D^*_{S_a(u)}(O,P)=S(u|a,L)-S_a(u) -\frac{I(A=a)S(u|a,L)}{P(A=a|L)}\int_0^u\frac{dM(r|a,L)}{H(r|a,L)},
$$
where $dM(r|A,L)$ is the (increment) of the counting process martingale corresponding to the counting process $N(r)=I(T^*\leq r,\Delta=1)$.
The efficient estimator of $S_a(u)$ is then given by 

\begin{equation}
\label{Eff.KM}
\hat S_a(u)=\mathbb{P}_n\left [ S_n(u|a,L)\left \{1-\frac{I(A=a)}{P(A=a|L)}\int_0^{u}\frac{dM_n(r|a, L)}{S_n(r|a,L)K_n(r|a,L)}
\right\}\right ],    
\end{equation} 
where $dM_n(r|a, L)$ is the estimated $dM(r|A,L)$. 
It further follows that $\hat S_a(u)$ is asymptotically normally distributed with mean $S_a(u)$ and variance that can be consistently estimated as the sample variance of $D^*_{S_a(u)}(O,P_n)$. 
% Representaion of EIF using censoring martinagle removed.
%One may rewrite $D^*_{S_a(t)}(O,P)$  as
%$
%D^*_{S_a(t)}(O,P)=S(t|a,L)-S_a(t) +\tilde D_{S_a(t)}(O,P),
%$
%where 
%$$
%\tilde D_{S_a(t)}(O,P)=\frac{I(A=a)}{P(A=a|L)}\left\{\frac{I(T>t)\Delta}{K(T|a,L)}+S(t|a,L)\int_0^{ t}\frac{dM_C(r|a,L)}{H(r|a,L)}-S(t|a,L)\right\}
%$$
%with $\Delta=I(T\leq C)$.

%$$
%S_n(t|a,L)\left \{1-\int_0^{t}\frac{dM_n(r|a, L)}{S_n(r|a,L)K_n(r|a,L)}\right\}
%-\hat S_a(t).$$
%\item [(iv)] 
%Because of \eqref{mean}, 
%$$\hat\psi^{os}=\int \hat\eta^{os}(y) dy
%$$
%that  is asymptotically normal distributed with mean $\psi$ and  variance that can be consistently estimated as the sample variance of 
%$\int \{D^*_{\eta_1(y)}(O,P_n)-D^*_{\eta_0(y)}(O,P_n)\}dy$. (MAYBE assumptions need to be checked here, uniform in $y$?)
\end{itemize}

%\include{examples} 
%KKH and CP decides which example to include.
%% OBS: examples.tex only inlcudes first example by KKH+CP. Old file, 
%% examples.tex, is now called examples_old_aug13.tex

\section{Numerical studies}\label{sec:numerical}

\subsection{Simulation study}\label{sec:sim}

To demonstrate the methodology, we consider a simulation study of three different
scenarios. The first scenario is calibrated to imitate the FLOW study considered in Section~\ref{sec:flow}.

Here we consider a 1:1 randomization, $\pi = P(A=1) = 0.5$,
(A:=1, active, and A:=0, placebo)
with two baseline covariates, $L_2$ a binary indicator for other treatment
usage, $P(L_2=1) = 0.16$,
and a clinical outcome at baseline, $L_1$ (eGFR),
$$
L_1\mid L_2 = 0 \sim \mathcal{N}(46, 225),
\quad
L_1\mid L_2 = 1 \sim \mathcal{N}(51, 235).
$$
The model for the clinical outcome given no truncation at the landmark-time (2
years), is given by
\begin{align*}
&Y\mid L_1, L_2, A = 0 \sim \mathcal{N}(40 + 0.90 L_1 + 2.0 (L_2-\E L_2), 140), \\
&Y\mid L_1, L_2, A = 1 \sim \mathcal{N}(51 + 0.86 L_1 + 2.6 (L_2-\E L_2), 148).
\end{align*}
The event time, $T$ (time to 4-component chronic kidney disease (CKD) or all-cause death), is distributed according to a Weibull model with cumulative hazard given by
\begin{align*}
  \Lambda_T(u | L_1,L_2,A=0) = \exp\{-12.8 -0.023 L_1 -0.56 (L_2-\E L_2)\} u^{1.64} \\
  \Lambda_T(u | L_1,L_2,A=1) = \exp\{-13.0 -0.020 L_1 -0.23 (L_2-\E L_2)\} u^{1.64}
\end{align*}
We consider a scenario for the censoring distribution to illustrate the
properties of the estimator under more extreme right-censoring than was observed
in the original study, which was dominated by administrative censoring. Here we
let the censoring mechanism follow a Weibull model with cumulative hazard
\begin{align*}
  \Lambda_C(u | A) = \exp(-20)u^{2.7 + 0.2A}.
\end{align*}
For the nuisance models, we used a Cox-model for both the censoring and
time-to-event models stratified by treatment and with main effects of $L_1$ and
$L_2$. The outcome ($Y>y$) model was modelled using a Probit regression with
main effects of $L_1$ and $L_2$ and treatment interaction.
We considered the landmark time $t=2$ years and an eGFR cut-point $y=45$.

The simulations were repeated 10,000 times with a sample-size of $n=500$ and $n=1000$, and
results are summarized in Table~\ref{tab:sim1}. All estimates are essentially
unbiased with coverage close to the nominal level. Furthermore, we see large
efficiency gains with the one-step estimator for $\eta_a(y)$ compared to the
unadjusted estimator,
$\{\widehat{K}(t|A=a)\widehat{\pi}_a\}^{-1}\mathbb{P}_n I(A=a,T^\ast>t,Y>y)$.
However, in this setting, where the covariate effects are quite weak and with
low event rates, we do not see such gains for the survival probabilities $S_a$.

In the second scenario, we consider a stronger covariate effect on the
time-to-event process with cumulative hazards given by
\begin{align*}
  \Lambda_T(u | L_1,L_2,A=0) = \exp\{-20.8 -0.83 L_1 -0.56 (L_2-\E L_2)\} u^{1.64} \\
  \Lambda_T(u | L_1,L_2,A=1) = \exp\{-21.0 -0.75 L_1 -0.23 (L_2-\E L_2)\} u^{1.64}.
\end{align*}
The results are summarized in Table~\ref{tab:sim2} for $n=1000$ and conclusions are similar
except that in this scenario, we see a substantial improvement for the one-step
estimator of the survival probabilities with 25\% smaller standard errors
compared to the unadjusted estimates.

In the final scenario 3, we consider a non-randomized situation with
\begin{align*}
P(A|L_1,L_2) = \operatorname{expit}(1 + 0.025 L_1 - 0.5L_2).
\end{align*}
Further, in this scenario, we consider data-adaptive methods for all the
nuisance models. For the outcome model $P(Y>y|T>t,L_1,L_2,A)$ we use a
super-learner with 5-fold cross-validation and three base learners: empirical
mean in each treatment strata, a logistic regression model including main
effects of $L_1, L_2$ and treatment interactions, and a multivariate adaptive
regression spline (MARS) including all second-order interactions \citep{mars:r}.
Similarly, for the treatment model $P(A=1|L_1,L_2)$ we use super-learner with
5-fold cross-validation and three base-learners: empirical mean, logistic
regression with main effects of $L_1, L_2$, and a MARS model with second-order
interactions given $L_1, L_2$. For the time-to-event and censoring models we
use the super-learner algorithm described in \citep{westling2024inference} and
implemented in the \texttt{survSuperLearner} package. We apply 5-fold
cross-validation with the base models:
Kaplan-Meier estimator stratified by treatment, Cox model including main effects of $L_1$ and
$L_2$ and stratified by treatment, and a random survival forest
\citep{ranger-r} including all covariates $A, L_1, L_2$ with default tuning parameters.
The final one-step estimators are constructed using 5-fold
cross-fitting \citep{chernozhukov_double/debiased_2018}. Results are summarized
in Table~\ref{tab:sim3} for $n=1000$, where we also show the results of using the plugin
estimator ignoring the debiasing term. The one-step estimator still performs well in this scenario, whereas the unadjusted estimator, as expected, exhibits large bias due to the unobserved confounding caused by $L_1$ and $L_2$. The plugin estimator, while not showing large bias, suffers from very poor precision, emphasising the need for debiasing and cross-fitting when machine learning is used for predicting the nuisance model components.

\begin{table}
  \centering
  \caption{\label{tab:sim1}Simulation results in scenario 1 for the one-step
    estimates of the parameters $\eta_a(y) = P(T>t, T>y|A=a)$ and
    $S_a = P(T>t|A=a)$ together with the unadjusted estimates ignoring the
    covariates.}
  \begin{tabular}{lrrrrrrr}
    \toprule
    \multicolumn{8}{c}{$n=500$}  \\
\toprule
  & Mean & Bias & SE & SD & SE/SD & Coverage & Rel.eff\\
\midrule
$\eta_0(y)$ & 0.3597 & -0.0004 & 0.0286 & 0.0286 & 0.9988 & 0.9489 & 1.0000\\
$\eta_0(y)$ \emph{unadj.} & 0.3599 & -0.0002 &  & 0.0320 &  &  & 1.1160\\
$\eta_1(y)$ & 0.4225 & -0.0013 & 0.0333 & 0.0338 & 0.9874 & 0.9429 & 1.0000\\
    $\eta_1(y)$ \emph{unadj.} & 0.4223 & -0.0015 &  & 0.0382 &  &  & 1.1308\\
\midrule
$S_0(y)$ & 0.8741 & 0.0002 & 0.0212 & 0.0213 & 0.9988 & 0.9438 & 1.0000\\
$S_0(y)$ \emph{unadj.} & 0.8741 & 0.0002 &  & 0.0214 &  &  & 1.0055\\
$S_1(y)$ & 0.8931 & 0.0003 & 0.0210 & 0.0210 & 1.0028 & 0.9414 & 1.0000\\
$S_1(y)$ \emph{unadj.} & 0.8930 & 0.0003 &  & 0.0210 &  &  & 1.0010\\
    \toprule

    \multicolumn{8}{c}{$n=1000$}  \\
\toprule
  & Mean & Bias & SE & SD & SE/SD & Coverage & Rel.eff\\
\midrule
$\eta_0(y)$ & 0.3598 & -4e-04 & 0.0203 & 0.0204 & 0.9937 & 0.9483 & 1.0000\\
$\eta_0(y)$ \emph{unadj.} & 0.3598 & -3e-04 &  & 0.0228 &  &  & 1.1170\\
$\eta_1(y)$ & 0.4232 & -6e-04 & 0.0237 & 0.0237 & 0.9990 & 0.9478 & 1.0000\\
    $\eta_1(y)$ \emph{unadj.} & 0.4233 & -5e-04 &  & 0.0271 &  &  & 1.1437\\
\midrule
$S_0(y)$ & 0.8735 & -3e-04 & 0.0151 & 0.0151 & 0.9966 & 0.9482 & 1.0000\\
$S_0(y)$ \emph{unadj.} & 0.8735 & -4e-04 &  & 0.0152 &  &  & 1.0045\\
$S_1(y)$ & 0.8931 & 4e-04 & 0.0149 & 0.0148 & 1.0081 & 0.9471 & 1.0000\\
$S_1(y)$ \emph{unadj.} & 0.8931 & 3e-04 &  & 0.0148 &  &  & 1.0005\\
\bottomrule
\end{tabular}
\end{table}

\begin{table}
  \centering
  \caption{\label{tab:sim2}Simulation results for the one-step estimates of the parameters
    $\eta_a(y) = P(T>t, T>y|A=a)$ and
    $S_a = P(T>t|A=a)$ together with the unadjusted estimates ignoring the
    covariates. Scenario 2 with stronger covariate effects on time-to-event endpoint.
  }
  \begin{tabular}{lrrrrrrr}
\toprule
  & Mean & Bias & SE & SD & SE/SD & Coverage & Rel.eff\\
\midrule
$\eta_0(y)$ & 0.3889 & -0.0002 & 0.0202 & 0.0202 & 1.0004 & 0.9499 & 1.0000\\
$\eta_0(y)$ \emph{unadj.} & 0.3887 & -0.0004 &  & 0.0230 &  &  & 1.1378\\
$\eta_1(y)$ & 0.4497 & -0.0017 & 0.0233 & 0.0234 & 0.9917 & 0.9447 & 1.0000\\
    $\eta_1(y)$ \emph{unadj.} & 0.4499 & -0.0015 &  & 0.0267 &  &  & 1.1402\\
\midrule
$S_0(y)$ & 0.7819 & 0.0006 & 0.0138 & 0.0139 & 0.9893 & 0.9481 & 1.0000\\
$S_0(y)$ \emph{unadj.} & 0.7811 & -0.0002 &  & 0.0186 &  &  & 1.3372\\
$S_1(y)$ & 0.8008 & -0.0002 & 0.0135 & 0.0136 & 0.9956 & 0.9491 & 1.0000\\
$S_1(y)$ \emph{unadj.} & 0.8005 & -0.0004 &  & 0.0184 &  &  & 1.3546\\
\bottomrule
\end{tabular}
\end{table}

\begin{table}
  \centering
  \caption{\label{tab:sim3}Simulation results for scenario 3 (non-randomized).
    $\eta_a(y) = P(T>t, T>y|A=a)$ and $S_a = P(T>t|A=a)$ are the one-step estimates
    based on nuisance models estimated with ML techniques, together with the
    unadjusted estimates (ignoring covariates $L_1,L_2$), and the plugin
    estimates (ignoring the debiasing term). }
  \begin{tabular}{lrrrrrrr}
\toprule
  & Mean & Bias & SE & SD & SE/SD & Coverage & Rel.eff\\
\midrule
$\eta_0(y)$ & 0.3896 & 0.0005 & 0.0268 & 0.0268 & 1.0004 & 0.9494 & 1.0000\\
$\eta_0(y)$ \emph{unadj.} & 0.3177 & -0.0714 &  & 0.0287 &  &  & 1.0732\\
$\eta_0(y)$ \emph{plugin} & 0.3917 & 0.0026 &  & 0.0367 &  &  & 1.3711\\
$\eta_1(y)$ & 0.4533 & 0.0019 & 0.0206 & 0.0206 & 0.9986 & 0.9490 & 1.0000\\
$\eta_1(y)$ \emph{unadj.} & 0.4795 & 0.0281 &  & 0.0226 &  &  & 1.0978\\
$\eta_1(y)$ \emph{plugin} & 0.4472 & -0.0042 &  & 0.0304 &  &  & 1.4769\\
\midrule
$S_0(y)$ & 0.7822 & 0.0009 & 0.0148 & 0.0147 & 1.0078 & 0.9486 & 1.0000\\
$S_0(y)$ \emph{unadj.} & 0.7060 & -0.0753 &  & 0.0271 &  &  & 1.8512\\
$S_0(y)$ \emph{plugin} & 0.7820 & 0.0007 &  & 0.0285 &  &  & 1.9443\\
$S_1(y)$ & 0.8017 & 0.0007 & 0.0134 & 0.0134 & 0.9960 & 0.9444 & 1.0000\\
$S_1(y)$ \emph{unadj.} & 0.8299 & 0.0290 &  & 0.0146 &  &  & 1.0828\\
$S_1(y)$ \emph{plugin} & 0.8010 & 0.0001 &  & 0.0271 &  &  & 2.0164\\
\bottomrule
\end{tabular}
\end{table}

\subsection{Application to A Southwest Oncology Trial}\label{sec:oncology}
To illustrate our proposed method, we use data on quality of life one year after treatment for prostate cancer patients. The data were obtained from a Southwest Oncology  Group trial, specifically a randomized phase III trial comparing docetaxel plus estramustine with mitoxantrone plus prednisone in men with metastatic, hormone-independent prostate cancer. The dataset used here was created by \cite{ding2011identifiability} and contains observations on 487 men aged 47 to 88. Of these patients, 258  were randomly assigned to receive docetaxel plus estramustine and 229 were randomly assigned to receive mitoxantrone plus prednisone. We are interested in comparing these two treatments in terms of health-related quality of life (HRQoL) one year after receiving treatment, which is a score between 0 and 100 with 0  and 100 indicating worst HRQoL and the best HRQoL, respectively. In our illustration we also used a binary version using the cutpoint 50, so that $Y=1$ indicates good HRQoL while $Y=0$ indicates poor HRQoL.
In the first treatment group 130 died (out of 258) before 1 year and in the other treatment group 140 (out of 229) died before 1 year, so truncation by death is an issue in this dataset when the interest is to compare the HRQoL. %{\color{blue}OBS: No censoring here!}
We used the baseline HRQoL-score and age ($L$) to increase efficiency.
%Figure \ref{fig:Data_simplex} shows the proposed one-step estimator of
%$({\cal Q}_0,{\cal Q}_D)$ along with 95\% confidence region for the two treatment groups (left panel of figure). The right panel of
\medskip

\begin{figure}
\includegraphics[scale=1]{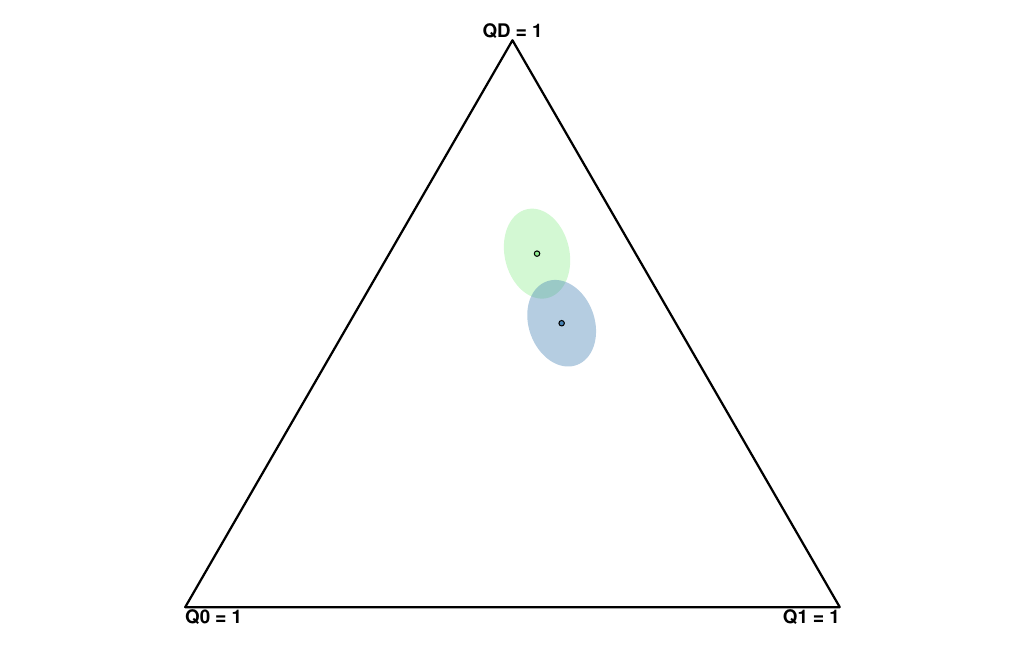}
\caption{Oncology Data. Simplex summarizing $(Q_0,Q_1,Q_D)$ along with 95\% confidence region for the two treatments: docetaxel plus estramustine (steel blue) and mitoxantrone plus prednisone (light green).}
\label{fig:Data_simplex}
\end{figure}%
% \centerline{Figure \ref{fig:Data_simplex}  about here}

Figure \ref{fig:Data_simplex} shows $({\cal Q}_0,{\cal Q}_1,{\cal Q}_D)$ along with a 95\% confidence region for the two treatment groups (green for treatment docetaxel plus estramustine, and blue for the other treatment group).
It is desirable to be away from the top corner  of the simplex (death before 1 year), and to be close to the bottom right corner of the simplex  (alive after 1 year and having a good HRQoL), and we thus see the that treatment with docetaxel plus estramustine, $(\hat {\cal Q}_0,\hat {\cal Q}_1,\hat {\cal Q}_D)=(0.18,0.32,0.50)$,
is superior to treatment with mitoxantrone plus prednisone, $(\hat {\cal Q}_0,\hat {\cal Q}_1,\hat {\cal Q}_D)=(0.15,0.23, 0.62)$.
Even though the 95\% confidence regions overlap slightly, the Wald test for equality of $({\cal Q}_1, {\cal Q}_D)$ for the two treatment arms gives a test-statistic of 8.1 that under the null is approximately $\chi^2$ with two degrees of freedom, resulting in a $p$-value of 0.02.

The chance of being alive after 1 year and having good HRQoL is estimated to be 0.32 when the treatment is  docetaxel plus estramustine, while it is  0.23 for the other treatment. The difference is  0.09 with corresponding  95\% CI: (0.02,0.15), again indicating borderline significance in favor of treatment with docetaxel plus estramustine.
We also calculated the proposed estimator of $\eta(t,y)$ for the different values of
HRQoL-score ($y$-values) in the data, ranging from 0 to 100.
The estimate is shown in Figure \ref{fig:Oncology_eta} along with 95\% confidence limits.

\begin{figure}
\includegraphics[scale=1]{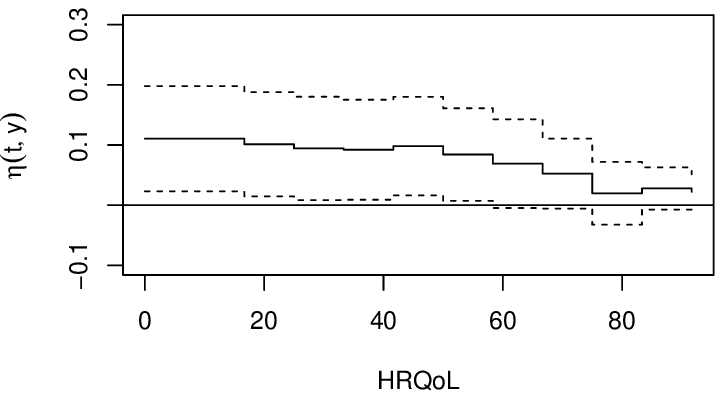}
\caption{Oncology Data. Plot of $\eta(t,y)$ with $t$ equal to 1 year versus $y$ corresponding to HRQoL values as observed in data ranging from 0 to 100 with 0 being poor HRQoL and 100 the highest HRQoL.}
\label{fig:Oncology_eta}
\end{figure}
%\centerline{Figure \ref{fig:Oncology_eta}  about here}
%ALSO(?): $E\{Y^1I(T^1>1)\}-\{Y^0I(T^0>1)\}$

From Figure \ref{fig:Oncology_eta} we see for example that
the probability of being alive after one year and having a HRQoL-score above 20 is estimated to be
0.10  (0.01, 0.19) larger when  treated with group ocetaxel plus estramustine than when treated with mitoxantrone plus prednisone.

These data were also analyzed by \cite{wang2017identification}, who considered the estimand $\delta=E(Y^1-Y^0|T^0\geq t,T^1\geq t)$ with $t$ equal to 1 year and
obtained  $\hat \delta\approx 3$ (-3.97,10.8).
However, this estimand is only identifiable under strong and unverifiable assumptions, and the individuals in the principal stratum $(T^0\geq t,T^1\geq t)$ are unknown.

\subsection{Application to the FLOW clinical kidney outcome
  trial}\label{sec:flow}

The FLOW (Evaluate Renal Function with Semaglutide Once Weekly) clinical trial
aimed to assess kidney outcomes by randomly assigning 3533 participants in a
1:1 ratio to receive either a placebo or a glucagon-like peptide-1 receptor
agonist (GLP-1 RA), alongside standard care \citep{Flow2024}. All enrolled
participants had a diagnosis of type 2 diabetes and were identified as high-risk
for chronic kidney disease (CKD), based on their estimated glomerular filtration
rate (eGFR) calculated from serum creatinine levels and the urinary
albumin-to-creatinine ratio (UACR).

The main outcome measure was the time to the first occurrence of a composite major kidney disease event, which included: a sustained decrease in eGFR of more than 50\% relative to baseline, sustained eGFR drop below 15 mL/min/1.73m², initiation of renal replacement therapy (such as dialysis or transplantation), and renal or cardiovascular death. Additionally, the annual rate of change in eGFR from the point of randomization, termed the total eGFR slope, served as a confirmatory secondary endpoint. The eGFR values were measured at prespecified scheduled visits throughout the trial.

We estimate $P(eGFR(t)>y, T>t)$ and $P(T>t)$ for $y=45$ and $t=2$ years. From a clinical perspective eGFR values below $y=45$ define progression into moderate to severe stage of chronic kidney disease. As for $t=2$ years, this landmark constitutes the standard duration for measuring the impact of treatment on eGFR in CKD trials \citep{stevens2024kdigo}. 

We specify a Cox regression for both the censoring and time-to-event models where we stratify by treatment and include main effects of baseline eGFR and use of SGLT2 inhibitors at baseline. The outcome model $G$ is specified using a logistic regression with treatment specific effects of baseline eGFR and use of SGLT2 inhibitors at baseline. To gauge the efficiency gain from baseline covariate adjustment, we also calculate unadjusted estimates that include only randomized treatment. Owing to randomization, both analysis approaches yield asymptotically unbiased estimates. An overview of analysis results is presented in Table \ref{tab:flowres}. 

\begin{table}
  \centering
  \caption{\label{tab:flowres} Analysis results based on FLOW trial data. Unadjusted estimates are based on randomized treatment alone.
  }
\begin{tabular}{lcccc}
\toprule
\multicolumn{5}{c}{$P(eGFR( t)>y, T> t)$:  Adjusted analysis}\\
Estimand& Estimate & SE & 95\% CI & P-value\\
\midrule
$\eta_{0}(y)$ & 0.3409 & 0.0100 & $[0.3214; 0.3605]$ & -\\
$\eta_{1}(y)$ & 0.3907 & 0.0101 & $[0.3708; 0.4105]$ & -\\
$\eta_{1}(y)-\eta_{0}(y)$ & 0.0497 & 0.0118 & $[0.0265; 0.0729]$ & 0.0000\\

\toprule
\multicolumn{5}{c}{$P(eGFR( t)>y, T> t)$:  Unadjusted analysis}\\
Estimand& Estimate & SE & 95\% CI & P-value\\
\midrule
$\eta_{0}(y)$  & 0.3487 & 0.0112 & $[0.3267; 0.3708]$ & -\\
$\eta_{1}(y)$  & 0.3984 & 0.0117 & $[0.3755; 0.4213]$ & -\\
$\eta_{1}(y)-\eta_{0}(y)$  & 0.0496 & 0.0162 & $[0.0178; 0.0815]$ & 0.0022\\

\toprule
\multicolumn{5}{c}{$P(T> t)$:  Adjusted analysis}\\
Estimand& Estimate & SE & 95\% CI & P-value\\
\midrule
$S_{0}( t)$ & 0.8699 & 0.0080 & $[0.8542;0.8855]$ & -\\
$S_{1}( t)$ & 0.9013 & 0.0071 & $[0.8874; 0.9153]$ & -\\
$S_{1}( t)-S_{0}( t)$ & 0.0315 & 0.0107 & $[0.0106; 0.0524]$ & 0.0032\\
\toprule
\multicolumn{5}{c}{$P(T> t)$:  Unadjusted analysis}\\
Estimand& Estimate & SE & 95\% CI & P-value\\
\midrule
$S_{0}( t)$ & 0.8704 & 0.0080 & $[0.8547; 0.8861]$ & -\\
$S_{1}( t)$  & 0.9008 & 0.0071 & $[0.8868; 0.9148]$ & -\\
$S_{1}( t)-S_{0}( t)$  & 0.0304 & 0.0107 & $[0.0094; 0.0515]$ & 0.0046\\
\bottomrule
\end{tabular}

\end{table}

From Table \ref{tab:flowres} we conclude a clear benefit of active treatment for both survival and kidney function compared to placebo. This conclusion is further supported by the joint graphical evaluation provided in the simplex summary in Figure \ref{Fig_Kidney}. We also note that the proposed covariate adjustment substantially reduces the standard error of the estimated joint probability compared to no adjustment. For survival probability, on the other hand, no noticeable gain in efficiency due to covariate adjustment is observed

\medskip

\begin{figure}
\label{Fig_Kidney}
  \centering
  \includegraphics[width=0.7\textwidth]{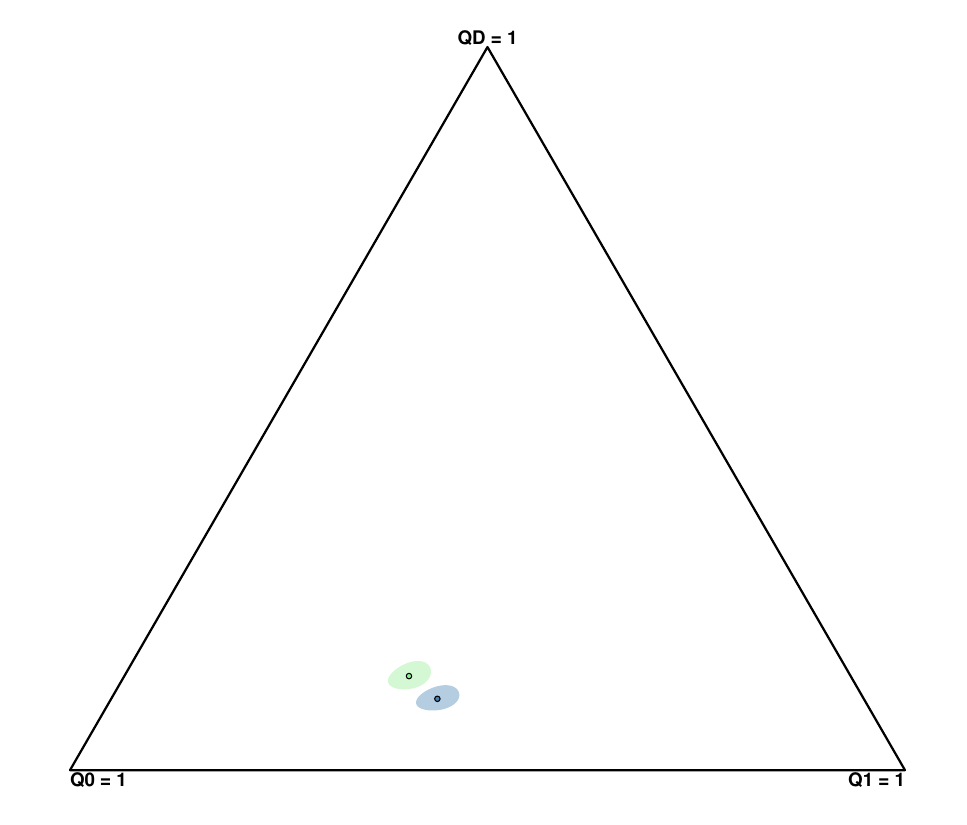}
  \caption{Kidney Data. Simplex summarizing $(Q_0,Q_1,Q_D)$ along with 95\% confidence region for the two treatments: GLP-1 RA + standard of care (steel blue) and placebo + standard of care (light green)}
\end{figure}%
% \centerline{Fig \ref{Fig_Kidney} about here}

\section{Concluding remarks}
In Section 2.2 we suggest to illustrate estimates of the survival probabilities $\mathcal{Q}^{a}_{\mathcal{D}}$ and the joint probabilities $\eta_{a}(y)$ in a simplex to gauge the effect of treatment.
%However, for a formalized assessment of the treatment benefit we would need to
Futher development might focus on
assessing the  one-sided superiority null-hypotheses
\begin{align*}
&H_{1}:\:\:  \mathcal{Q}^{0}_{\mathcal{D}}-\mathcal{Q}^{1}_{\mathcal{D}}\leq 0\\
& H_{2}:\:\:  \eta_{1}(y)-\eta_{0}(y)\leq 0.\\
\end{align*}
In cases where superiority can only be confirmed for one of the hypotheses an additional assessment of overall treatment benefit is warranted. Such an assessment could be specified by a utility in which the sum of weighted treatment contrasts is used to quantify the overall treatment benefit. For some prespecified weight $w\in (0,1)$, this amounts to assesssing the additional superiority null hypothesis:
$$
H_{u}:\:\: w\cdot(\mathcal{Q}^{0}_{\mathcal{D}}-\mathcal{Q}^{1}_{\mathcal{D}})+(1-w)\cdot\{\eta_{1}(y)-\eta_{0}(y)\}\leq 0.
$$
Some explicit strategies on how to assess these three hypotheses under type 1 error control are provided in \cite{pipper2025generalapproachconstructpowerful}.

We have shown how the points $(\mathcal{Q}^{0}_{\mathcal{D}},\eta_0(y))$ and $(\mathcal{Q}^{1}_{\mathcal{D}},\eta_1(y))$ may be depicted in a {\em simplex} for a graphical comparison between treatment groups, and in Figures \ref{fig:Data_simplex} and \ref{Fig_Kidney} the points were equipped with confidence regions for inference. Such a plot may be extended in several directions. Thus, curves are obtained by varying $y$, i.e., plotting $(\mathcal{Q}^{a}_{\mathcal{D}},\eta_a(y_j)), j=1,\dots,k, a=0,1$ for a number of thresholds $y_1,\dots,y_k$. Alternatively, the fixed time point, $t$, may be replaced by a number of time points. Inference for the resulting curves is an open question that may be subject to future research.

\nothere{
Points to consider:

\begin{itemize}
    \item More than one time point: $t_1, \ldots, t_k$
    \item Assessment of the treatment effect based on the joint probability
$P(T>\tau, Y>\mu \mid A=a)$ may be seen as a special-case of a utility
\begin{align*}
c_1 1(T<\tau) + c_2 1(Y>\mu, T>\tau)
\end{align*}
where we do not assign a cost $c_1$ but in a way let the events $(Y>\mu)$ and
$(T>\tau)$ count equally.
\item \cite{bannick2025:adj_rct}
\end{itemize}
}

\bibliographystyle{biometrika}
\bibliography{references}

%\newpage
\clearpage

\pagenumbering{arabic}  % Resets page number to 1 and enables numbering
\pagestyle{headings}
%$\mbox{}$

%\vspace{0.1cm}
\section*{Online Supplementary Material}

%\section*{Appendix}

\noindent{\it Example challenging the interpretation of the joint probability as a stand alone. \quad}
\vspace{-0.3cm}
%\section{An example challenging the interpretation of the joint probability as a stand alone}

%In this section 
This example  leads to a situation where there is a beneficial treatment effect when assessed by the risk of death but a harmful treatment effect when contrasting the  joint probability. 
This  pin points the need for complementing 
%illustrates that for a balanced  assessment of treatment benefit, a claim based on 
the joint probability %should be complemented by 
with the impact of treatment on the risk of death.       The example is built around a binary predictive biomarker $Z$ that is activated under treatment intervention $A=1$ to yield a differentiation in the rate of death. Under treatment intervention $A=0$ the rate of death is not affected by the biomarker. Specifically we assume that 
$
P(Z=z_{1})=P(Z=z_{2})=\tfrac{1}{2}$ for some  $0<z_1<z_2$, and further 
%We further assume 
that treatment intervention $A$ is independent of the biomarker $Z$. Conditional on intervention and biomarker, the time of death $T$ follows a distribution with hazard rates $\alpha(t| A=0, Z)=1$ and $
\alpha(t| A=1, Z)=Z$.
Note that when $z_{1}<1$, a beneficial effect of treatment in terms of reduced risk of death  manifests at landmark time $t>\frac{\log(2)}{1-z_{1}}$, since in that case
$$
\frac{P(T>t| A=1)}{P(T>t|A=0)}=\frac{1}{2}\exp\{(1-z_{1})t\}+\frac{1}{2}\exp\{(1-z_{2})t\}>1.
$$
Furthermore, a selection occurs leading to a lower proportion of subjects with the high biomarker value among survivors in the treatment arm $A=1$ at the landmark time $t$, that is:
\begin{align*}
P(Z=z_{2}| T>t, A=1)&=\frac{P(T>t| Z=z_{2}, A=1)}{P(T>t| Z=z_{1}, A=1)+P(T>t| Z=z_{2}, A=1)}\\
&=
\frac{1}{1+\exp\{ (z_{2}-z_{1})t\}}<\frac{1}{2}.
\end{align*}
For the treatment arm $A=0$ no selection occurs. This entails that treatment will impact a clinical score recorded among subjects without terminal event towards lower values through selection if the clinical score is positively related to the biomarker.        
We construct such a situation by setting the clinical score at time $t$ among survivors to the value of the biomarker, that is, 
\begin{align*}
Y = I(T>t)Z.
\end{align*}
From a straightforward calculation, it is seen that  
\begin{align*}
P(Y\geq z_{2},T>t|A=1)=&P(Y\geq z_{2},T>t|A=0)\times\frac{\exp\{(1-z_{2})t\}+\exp\{(1-z_{1})t\}}{1+\exp\{ (z_{2}-z_{1})t\}}\\
<& P(Y\geq z_{2},T>t|A=0),\:\:z_{2}>1.
\end{align*}
We conclude that when $t>\frac{\log(2)}{1-z_{1}}$ and $z_{2}>1$, there is a beneficial treatment effect judged by the risk of death but a harmful treatment effect judged by the joint probability. 
 \hfill $\square$

\noindent{\it Proof of Theorem \ref{thm:eif}}\quad
\vspace{-0.3cm}

We use $Z$ to denote the "full data" case (no censoring), so $Z=\{YI(T>t),I(T>t);A,L\}$.
It follows immediately using the tips and tricks of \cite{kennedy2022semiparametric} that the efficient influence curve of $\eta_1(y)$  w.r.t.  to the "full data" case  is
$$
D^*_{\eta_1}(Z,P)  =Q_y(1,L)-\eta_1(y)+\tilde D_{\eta_1}(Z,P),
$$
where $\tilde D_{\eta_1}(Z,P)$ is the debiasing term:
$$
    \tilde D_{\eta_1}(Z,P)=\frac{A}{\pi(L)}\left\{
    I(Y>y,T>t)-Q_y(A,L)\right\}
$$
We then apply \cite{tsiatis2006semiparametric} Ch. 10 (see Sect. 10.4) and apply some algebra to get the desired result.
\hfill $\square$

\noindent{\it Proof of Remark (iii) in Section \ref{sec:eif}} \quad
\vspace{-0.3cm}

If we let $Z=(I(T^*>t)Y,A,L)$ denote full data, which is what we refer to as the observed data in Theorem \ref{thm:eif} then the "full data" EIF is $D_{\eta_1}^*(O,P)$ given in that Theorem. The situation now is that $Y$ may be missing even if $T^*>t$ and we use $R$ to denote the the indicator of $Y$ being measured in this situation so that MAR is assumed: $P(R=1|T^*>t,A,L,Y)=P(R=1|T^*>t,A,L)\equiv p(A,L)$. The observed data (meaning that $Y$ may now be missing for some) EIF is then given by
$$
\frac{RD_{\eta_1}^*(O,P)}{p(A,L)}-\Pi\left (\frac{RD_{\eta_1}^*(O,P)}{p(A,L)} |\Lambda_2\right )
$$
where $\Lambda_2$ is so-called augmentation space,
$$
\Lambda_2=\left \{\frac{R-p(A,L)}{p(A,L)}h(A,L):\, \mbox{any }h(A,L)\right \}.
$$
Calculating the projection gives
$$
\Pi\left (\frac{RD_{\eta_1}^*(O,P)}{p(A,L)} |\Lambda_2\right )=
\frac{R-p(A,L)}{p(A,L)}E\{D_{\eta_1}^*(O,P)|A,L)\}.
$$
Now collecting terms  gives the claimed EIF.
 \hfill $\square$

\noindent{\it Proof of Theorem \ref{thm:remainder}}\quad
\vspace{-0.3cm}

We use the notation %$P(A=1|L)=\pi(L),$
$
\psi(P)=E\{Q(1,L)\}.
$
%with
%$$
%Q(1,L)=P(T>t,Y>y|A=1,L).
%$$
In the case where there is no censoring we get the remainder term
\begin{align}
\label{R-nocens}
    R_n=&\psi(P_n)-\psi(P)+E\{D^*(\psi,P_n)\}\nonumber\\
    =&E\left [ \left (\frac{\pi_n-\pi}{\pi_n}\right )(L)\{Q_n(1,L)-Q(1,L)\}\right ]\nonumber\\
    =&E\left [ \left (\frac{\pi_n-\pi}{\pi_n}\right )(L)S_n(t|1,L)(G_n-G)(y|1,L)\}\right ]\nonumber\\
    &+E\left [ \left (\frac{\pi_n-\pi}{\pi_n}\right )(L)
    G(y|1,L)(S_n-S)(t|1,L)\}\right ],
\end{align}
which follows as in the classical ATE example.
Above,
 we have decomposed $Q$ into $GS$,
$$
Q(1,L)=P(T>t,Y>y|A=1,L)=P(Y>y|T>t,A=1,L)P(T>t|A=1,L)\equiv (GS)(1,L),
$$
and used
$$
G_nS_n-GS=(G_n-G)S_n+(S_n-S)G.
$$
From \eqref{R-nocens} we see the robustness properties of the corresponding one-step estimator, and also that asymptotic linearity is obtained if we use machine learning (ML) on $\pi_n,G_n$ and $S_n$, where there is sufficiently fast convergence, ie $n^{1/2}$ times the two terms given in \eqref{R-nocens} needs to go to zero.

We now turn to the case where there is censoring. The EIF is
$$
D^*(\psi,P)=Q(1,L)-\psi(P)+\tilde D(\psi,P),
$$
where the debiasing term $\tilde D(\psi,P)$ is:
$$
    \tilde D(\psi,P)=\frac{A}{\pi(L)}\left\{
    \frac{I(t<T\wedge C)I(Y>y)}{K(t|A,L)}+Q(1,L)\int_0^{ t}\frac{dM_C(r|1,L)}{H(t|1,L)}
    -Q(1,L)\right\},
$$
where  $H=SK$.
The remainder term is
\begin{align*}
    R_n=&\psi(P_n)-\psi(P)+E\{D^*(\psi,P_n)\}\\
    =&E\left [ \left (\frac{\pi_n-\pi}{\pi_n}\right )(L)\{Q_n(1,L)-Q(1,L)\}\right ]
    -E\left [ \left (\frac{\pi}{\pi_n}\right )(L)\left (\frac{K_n-K}{K_n}\right )(t|1,L)Q(1,L)\right ]\\
    &+E\left [ \frac{A}{\pi_n(L)}Q_n(1,L)\int_0^t\frac{dM_C^n(r|1,L)}{H_n(r|1,L)}\right ],
\end{align*}
where the last term on the RHS in the latter display equals
$$
-E\left [ \left (\frac{\pi}{\pi_n}\right )(L)Q_n(1,L)
\int_0^t\left (\frac{H}{H_n}\right )(r|1,L)d(\Lambda_C^n-\Lambda_C)(r|1,L)\right ].
$$
After some further algebra, exploiting that
$$
\frac{H}{H_n}=\frac{K}{K_n}-\frac{K}{K_n}\frac{(S_n-S)}{S_n}
$$
and
$$
\int_0^t\left(\frac{K}{K_n}\right)(r|1,L)d(\Lambda_C^n-\Lambda_C)(r|1,L)=
\left(\frac{K}{K_n}\right)(t|1,L)-1,
$$
it follows that
\begin{align*}
    R_n
    =&E\left [ \left (\frac{\pi_n-\pi}{\pi_n}\right )(L)(Q_n-Q)(1,L)\}\right ]
    +E\left [ (Q_n-Q)(1,L)\left (\frac{K_n-K}{K_n}\right )(t|1,L)\right ]\\
    &-E\left [\pi(L)Q_n(1,L)\int_0^t
    \left (\frac{K}{K_n}\right )(r|1,L)\left(\frac{S_n-S}{S_n}\right )(r|1,L)d(\Lambda_C^n-\Lambda_C)(r|1,L)
    \right ]
    \\
    &+E\left [ \left (\frac{\pi_n-\pi}{\pi_n}\right )(L)(Q_n-Q)(1,L)\left (\frac{K_n-K}{K_n}\right )(t|1,L)\right ].
\end{align*}
Again, decomposing $Q$ into $GS$,
\begin{align*}
Q(1,L)&=P(T>t,Y>y|A=1,L)\\
&=P(Y>y|T>t,A=1,L)P(T>t|A=1,L)\\
&\equiv (GS)(1,L),
\end{align*}
and using
$$
G_nS_n-GS=(G_n-G)S_n+(S_n-S)G,
$$
we can rewrite $R_n$ as
\begin{align*}
    &E\left [ \left (\frac{\pi_n-\pi}{\pi_n}\right )(L)S_n(t|1,L)(G_n-G)(y|1,L)\right ]+E\left [ \left (\frac{\pi_n-\pi}{\pi_n}\right )(L)G(y|1,L)(S_n-S)(t|1,L)\right ]\\
   + &E\left [ S_n(t|1,L)(G_n-G)(y|1,L)\left (\frac{K_n-K}{K_n}\right )(t|1,L)\right ]\\
   +
    &E\left [ G(y|1,L)(S_n-S)(t|1,L)\left (\frac{K_n-K}{K_n}\right )(t|1,L)\right ]\\
   - &E\left [\pi(L)Q_n(1,L)\int_0^t
    \left (\frac{K}{K_n}\right )(r|1,L)\left(\frac{S_n-S}{S_n}\right )(r|1,L)d(\Lambda_C^n-\Lambda_C)(r|1,L)
    \right ]
    \\
    +&E\left [ \left (\frac{\pi_n-\pi}{\pi_n}\right )(L)\{Q_n(1,L)-Q(1,L)\}\left (\frac{K_n-K}{K_n}\right )(t|1,L)\right ].
\end{align*}
Again, from the latter display, we can read off the robustness properties of the corresponding one-step estimator.
Furthermore, asymptotic linearity is obtained if we use ML  on $\pi_n,G_n, K_n$ and $S_n$ assuming sufficiently fast convergence of the ML-methods, i.e., $n^{1/2}$ times each of the  terms given in the latter display  need to go to zero.
If we were to use parametric/semiparametric working models (which we do not recommend), we see the "usual" structure when dealing with right-censored data that if the censoring model is misspecified then we need the (two) outcome models to be correctly specified. It does not suffice to have the propensity score model correctly specified to preserve consistency. Specifically, consistency is attained if either (i) the outcome models ($G_n$ and $S_n$) are correctly specified or if (ii) the propensity score model ($\pi_n$) and the censoring model ($K_n$) are correctly specified. \hfill $\square$

%\newpage

\nothere{\subsection*{Net benefit estimand}
% Not in this paper
We first consider the following estimand
$$
\psi_{10}(t)=E\{P(T^1>t, \tilde T^0>t, Y^1>\tilde Y^0|L)\}
$$
by which we mean that we randomly pick two subjects with the same value $L$ of the confounders, and we then treat one ($A=1)$ but not the other ($A=0$), and then evaluate the probability that they both survive beyond time point $t$ and that the treated subject will have a higher value of the response (QOL) than the untreated subject. We think of $t$ as fixed and will now write $\psi_{10}$. Eventually we are interested in an estimand like
$$
\psi=\psi_{10}-\psi_{01},
$$
where $\psi_{01}$ is obtained by reversing the treatments in the definition of $\psi_{10}$.
Could also consider
$$
\frac{\psi_{10}-\psi_{01}}{\psi_{10}+\psi_{01}},
$$
or some other summary measure.
As a side remark, considering recurrent events, we might also be interested in
$$
E\{P(T_D^1>t, \tilde T_D^0>t, N_1(t)<\tilde N_0(t)|L)\}
$$
by which the inner part is supposed to mean the probability
that both subjects survive beyond time point $t$ and that the treated will have a smaller number of recurrent events  than the untreated.

Returning to the estimand $\psi_{10}$ in the QOL-setting,
we can express $\psi_{10}$ as
$$
\psi_{10}=E\left [ \left \{\int S_Y(y|T>t,A=1,L)dF_Y(y|T>t,A=0,L)\right \} S_T(t|A=1,L)S_T(t|A=0,L)\right ].
$$
We proceed by calculating the corresponding efficient influence function by first assuming that there is no censoring and then generalizing to the right censoring case.

Define

\begin{align*}
    \phi_1(t,L)&=\int S_Y(y|T>t,A=1,L)dF_Y(y|T>t,A=0,L)\\
    \phi_2(t,L)&=S_T(t|A=1,L)S_T(t|A=0,L)
\end{align*}
so that
$$
\psi_{10}=\int \phi_1(t,l)\phi_2(t,l)f_L(l)dl
$$
from which we see that the efficient influence function corresponding to $\psi_{10}$ is
$$
D^*(\psi_{10},Z)=\phi_1(t,L)\phi_2(t,L)-\psi_{10}+\tilde D(\psi_{10},Z),
$$
where $\tilde D(\psi_{10},Z)$ is the debiasing term that we now  calculate.
Define
\begin{align*}
    g(A,L)&=\frac{A}{\pi(L)}+\frac{1-A}{1-\pi(L)}\\
    h_t(A,L)&=g(A,L)\frac{I(T>t)}{S_T(t|A,L)},
\end{align*}
where $\pi(L)=P(A=1|L)$. After some algebra, it follows
\begin{align*}
    \tilde D(\psi_{10},Z)&=    \phi_1(t,L)g(A,L)S(t|1-A,L)\bigl\{I(T>t)-S(t|A,L)\bigr\}\\
    &+\phi_2(t,L)h_t(A,L)\biggl\{AF_Y(Y|T>t,1-A,L)+(1-A)S_Y(Y|T>t,1-A,L)\\&+\int S_Y(y|T>t,A=1,L)dF_Y(y|T>t,A=0,L)\biggr\}\\
\end{align*}
Denote
$$
G(Y;A,L)=AF_Y(Y|T>t,1-A,L)+(1-A)S_Y(Y|T>t,1-A,L).
$$
The only terms in $D^*(\psi_{10},Z)$ that are affected by censoring (denoted by $C$) are those involving $I(T>t)$ and $I(T>t)G(Y;A,L)$.
For any $v_t(A,L)$,  $v_t(A,L)I(T>t)$ should be replaced by (Tsiatis, Chapter 10)
\begin{equation}
\label{inf-term}
  v_t(A,L)\biggl \{ \frac{I(t<T\wedge C)}{K(t|A,L)}+S_T(t|A,L)\int_0^{t}\frac{dM_C(r|A,L)}{S_T(r|A,L)K(r|A,L)},
\biggr\}
\end{equation}
where  $K(r|A,L)=P(C>r|A,L)$ and $M_C(r|A,L)$ is the censoring martingale given $A,L$, by which we mean $dM_C(r|A,L)=dN_c(r)-I(r\leq T\wedge C)d\Lambda_C(r|A,L)$ with $N_c(r)=I(T\wedge C\leq r,\Delta=0)$.
The term
$$
v_t(A,L)I(T>t)G(Y;A,L)
$$
should be replaced by
\begin{align}
\label{inf-2}
v_t(A,L)\biggl \{ &\frac{I(t<T\wedge C)G(Y;A,L)}{K(t|A,L)}\notag\\
+
&S_T(t|A,L)\int G(y;A,L)dF_Y(y|T>t,A,L)
\int_0^{t}\frac{dM_C(r|A,L)}{S_T(r|A,L)K(r|A,L)}
\biggr\}.
\end{align}
The above comes about as full information, when censoring can occur, (no coarsening) happens when $T\wedge t\leq C$, because even the case $t<C<T$ is sufficient information for evaluating $I(T>t)$. The first term in \eqref{inf-term} (ignoring the $v_t(A,L)$) is obtained as
$$
\frac{I(T>t)I(C>T\wedge t)}{K(T\wedge t|A,L)}=\frac{I(t<T\wedge C)}{K(t|A,L)}.
$$
To obtain the second term in \eqref{inf-2}, we calculated, for $r<t$,
$$
E\{I(T>t)G(Y;A,L)|T>r,A,L\}=\frac{S_T(t|A,L)}{S_T(r|A,L)}\int G(y;A,L)dF_Y(y|T>t,A,L)
$$
Define $\tilde D(\psi_{10},O)$ as $\tilde D(\psi_{10},Z)$ replacing $I(T>t)$ in $\tilde D(\psi_{10},Z)$ with
$$
\frac{I(t<T\wedge C)}{K(t|A,L)}+S_T(t|A,L)\int_0^{ t}\frac{dM_C(r|A,L)}{S_T(r|A,L)K(r|A,L)}.
$$
The one-step estimator $\psi_{10}^n$ is then given by
$$
\psi_{10}^n=\mathbb{P}_n\{\phi_1^n(t,L)\phi_2^n(t,L)\}+\mathbb{P}_n\{\tilde D^n(\psi_{10},O)\}
$$
where $\mathbb{P}_n\nu(D)=n^{-1}\sum_{i=1}\nu(D_i)$ is empirical measure and $\phi_1^n(t,L)$ is given by $\phi_1(t,L)$ replacing all unknowns by empirical counterparts obtained using machine learning. Similarly with $\phi_2^n(t,L)$ and $D^n(\psi_{10},O)$.
}

\end{document}